\newcommand{\remove}[1]{\ignorespaces}
\newcommand{\mayberemove}[1]{\ignorespaces}
\newcommand{\ftrm}{fail-to-receive model} %
\newcommand{\ftsm}{fail-to-send model} %
\begin{document}

\title{Time Is Not a Healer, but It Sure Makes Hindsight 20:20}

\author{Eli Gafni\inst{1} \and Giuliano Losa\inst{2}\orcidlink{0000-0003-2341-7928}}
\authorrunning{Eli Gafni and Giuliano Losa}

\institute{University of California, Los Angeles, USA\\ \email{eli@ucla.edu}\\
    \and Stellar Development Foundation, San Francisco, USA\\ \email{giuliano@stellar.org}}

\maketitle

\begin{abstract}

In the 1980s, three related impossibility results emerged in the field of distributed computing. First, Fischer, Lynch, and Paterson demonstrated that deterministic consensus is unattainable in an asynchronous message-passing system when a single process may crash-stop. Subsequently, Loui and Abu-Amara showed the infeasibility of achieving consensus in asynchronous shared-memory systems, given the possibility of one crash-stop failure. Lastly, Santoro and Widmayer established the impossibility of consensus in synchronous message-passing systems with a single process per round experiencing send-omission faults.

In this paper, we revisit these seminal results.
First, we observe that all these systems are equivalent in the sense of implementing each other.
Then, we prove the impossibility of consensus in the synchronous system of Santoro and Widmayer, which is the easiest to reason about.
Taking inspiration from Völzer's proof pearl and from the Borowski-Gafni simulation, we obtain a remarkably simple proof.

We believe that a contemporary pedagogical approach to teaching these results should first address the equivalence of the systems before proving the consensus impossibility within the system where the result is most evident.

\end{abstract}

\section{Introduction}

In their famous 1983 paper, Fischer, Lynch, and Paterson~\cite{fischer_impossibility_1983} (hereafter referred to as FLP) established that deterministic consensus is unattainable in an asynchronous message-passing system where one process may fail by stopping.
As a foundational result in distributed computing and one of the most cited works in the field, it is crucial to teach this concept in an accessible manner that highlights the core reason for the impossibility.
However, we believe that the original FLP proof is too technical for this purpose and that its low-level system details can obscure the essence of the proof.

In our quest to simplify the FLP proof, we revisit the subsequent extensions and improvements of the FLP result, including Loui and Abu-Amara's asynchronous shared-memory proof~\cite{loui_memory_1987} and Santoro and Widmayer's impossibility proof for synchronous systems with one process failing to send some of its messages per round~\cite{santoro_time_1989}.
The latter paper was titled "Time is not a healer," which inspired our own title.

While the impossibility of consensus was demonstrated in all of these systems, the proofs did not rely on reductions, but instead rehashed FLP's valency-based argument.
This should have suggested that there are reductions between those models.
In this work, we use elementary simulation algorithms to show that the aforementioned systems can indeed implement each other, and thus it suffices to prove consensus impossible in just one of them.

We then reconsider the impossibility proof in the system that is the easiest to reason about: the synchronous system of Santoro and Widmayer.
In this system, we present a new and remarkably simple proof of the impossibility of consensus, which we believe is of great pedagogical value.

Unlike Santoro and Widmayer, we avoid using a valency argument inspired by FLP.
Instead, we draw ideas from the Borowski-Gafni~\cite{borowsky_generalized_1993} simulation of a 1-resilient system using two wait-free processes and from Völzer's~\cite{volzer_constructive_2004} brilliant impossibility proof.
Völzer's idea, which he used to simplify FLP in the original FLP model, is to compare runs with one missing process with fault-free runs.

Next, we give an overview of the technical contributions of the paper.

\subsection{Four Equivalent Models}

The paper considers four models:

\begin{itemize}
    \item \textbf{The FLP model}. This is the original asynchronous message-passing model of FLP, in which at most one process may crash-stop.
    \item \textbf{The (1-resilient) shared-memory model}. This is an asynchronous shared-memory system in which at most one process may crash-stop.
    \item \textbf{The (1-resilient) fail-to-receive model}. This is a synchronous, round-by-round message-passing system in which processes never crash, but every round, each process might fail to receive one of the messages sent to it.
    \item \textbf{The (1-resilient) fail-to-send model}. This is a synchronous, round-by-round message-passing system in which processes never crash, but every round one process might fail to send some of its messages. This model was originally presented by Santoro and Widmayer~\cite{santoro_time_1989}.
\end{itemize}
For the sake of brevity, in the rest of the paper we usually omit the ``1-resilient'' prefix in the model names.

Assuming we have $n>2$ processes\footnote{$n>2$ is required for the ABD shared-memory simulation algorithm and by the get-core algorithm.}, all the models above solve the same colorless tasks~\footnote{See~\Cref{sec:simulations_and_tasks} for a discussion of colorless tasks.}.
To show this, we proceed in three steps, each time showing that two models simulate each other in the sense of Attiya and Welch~\cite[Chapter 7]{attiya_distributed_2004}.
We write $A \leq B$ when there is an algorithm that simulates $A$ in $B$ (and therefore $B$ is stronger than $A$), and $A \equiv B$ when $A$ and $B$ simulate each other.
The three steps are the following.
\begin{enumerate}
    \item FLP $\equiv$ (1-resilient) shared memory is a well-known result. We can simulate the FLP model in 1-resilient shared memory by implementing message-passing communication using shared-memory buffers that act as mailboxes. In the other direction, we can use the ABD~\cite{attiya_sharing_1995} shared-memory simulation algorithm to simulate single-writer single-reader registers and then apply standard register transformation to obtain multi-reader multi-writer registers~\cite[Chapter 10]{attiya_distributed_2004}. A rigorous treatment of the equivalence between asynchronous message passing and shared memory appears in Lynch's book~\cite[Chapter 17]{lynch_distributed_1996}.
    \item FLP $\equiv$ fail-to-receive. fail-to-receive $\leq$ FLP follows from a simple synchronizer algorithm that is folklore in the field (each process has a current round and waits to receive messages sent in the current round from $n-2$ other processes before moving to the next round).
            In the other direction, we need to guarantee that the messages of all processes except one are eventually delivered; to do so, we simply require processes to keep resending all their messages forever and to do a little bookkeeping do avoid wrongly delivering a message twice.
    \item fail-to-receive $\equiv$ fail-to-send. fail-to-receive $\leq$ fail-to-send is trivial\footnote{This is an instance of the following first-order logic tautology: $\exists y. \forall x .P(x,y)\rightarrow \forall x. \exists y . P(x,y)$}.
        In the other direction, we present in~\Cref{sec:fts-eq-ftr} a simulation based on the get-core algorithm of Gafni~\cite[Chapter~14]{attiya_distributed_2004}.
        Although the simulation relies entirely on this known algorithm, this is a new result.
\end{enumerate}

\subsection{The New Impossibility Proof}

Having shown that all four models above are equivalent, we show the impossibility of deterministic consensus in the model in which it is the easiest: the synchronous model of Santoro and Widmayer.

Inspired by Völzer~\cite{volzer_constructive_2004}, we restrict our attention to fault-free runs and runs in which one process remains silent.
This allows us to inductively construct an infinite execution in which, every round $r$, making a decision depends on one process $p_r$: if $p_r$ fails to send any message, then the decision is $b_r$, but if all messages are successfully sent, then the decision is $\overline{b_r}\neq b_r$.
Both the initial and inductive steps of the construction follow from a straightforward application of the one-dimensional case of Sperner's lemma.

The proof is also constructive in the sense of Constable~\cite{constable_effectively_2011}:
it suggests a sequential procedure that, given a straw-man consensus algorithm, computes an infinite nondeciding execution.

\section{The Models}
\label{sec:models}

We consider a set $\mathcal{P}$ of $n$ deterministic processes.
Each process has a local state consisting of a read-only input register, an internal state, and a write-once output register.
A configuration of the system is a function that maps each process to its local state.
Initially, each input register contains an input value taken from a set of inputs that is specific to the task being solved (e.g. $0$ or $1$ for binary consensus), each process is in its initial internal state, and each process has an empty output register.
An execution is a sequence of configurations starting with in an initial configuration and where each transition from one configuration to the next depends on the model and the algorithm.

Regardless of the model, when a process writes $v$ to its output register, we say that it outputs $v$.
Moreover, we assume that a process never communicates with itself (e.g.\ a process never sends a message to itself).

\subsubsection{The FLP model}
In the FLP model, processes communicate by message passing, and each process takes atomic steps that consist in a) optionally receiving a message, b) updating its local state, and optionally, if it has not done so before, its output register, and c) sending any number of messages.

Processes are asynchronous, meaning that they take steps in an arbitrary order and there is no bound on the number of steps that a process can take while another takes no steps.
However, the FLP model guarantees that every process takes infinitely many steps and that every message sent is eventually received, except that at most one process may at any point fail-stop, after which it permanently stops taking steps and stops receiving messages.

\subsubsection{The 1-resilient shared-memory model}
In the 1-resilient shared-memory model, processes are asynchronous and communicate by atomically reading or writing multi-writer multi-reader shared-memory registers, and at most one process may fail-stop.

\subsubsection{The \ftsm{}}
In the \ftsm{}, processes also communicate by message passing, but execution proceeds in an infinite sequence of synchronous, communication-closed rounds.
Each round, every process first broadcasts a unique message.
Once every process has broadcast its message, each process receives all messages broadcast in the current round, in a fixed order, except that an adversary picks a unique process $p$ and a set of processes $P$ which do not receive the message broadcast by $p$.
Finally, at the end of the round, each process updates its internal state and optionally, if it has not done so before, its output register, both as a function of its current local state and of the set of messages received in the current round before entering the next round.
No process ever fails.

We write $c\xrightarrow{p, P} c'$ to indicate that, starting from configuration $c$ at the beginning of a round, the adversary chooses the process $p$ and drops the messages that $p$ sends to the members of the set of processes $P$, and the round ends in configuration $c'$.
Note that in a transition $c\xrightarrow{p, P} c'$, it is irrelevant whether $p\in P$, since a process does not send a message to itself.
Also note that, because the order in which messages are received is fixed, the triple $c$, $p$, $P$ determines $c'$.

With this notation, an execution is an infinite sequence of the form
\[
c_1 \xrightarrow{p_1, P_1} c_2 \xrightarrow{p_2, P_2} c_3 \xrightarrow{p_3, P_3} \dots
\]
where, in $c_1$, each process has input $0$ or $1$ (if the task is binary consensus), is in the initial internal state, and has an empty output register, and for every $i$, $p_i\in \mathcal{P}$ and $P_i\subseteq \mathcal{P}$.
An example appears in~\Cref{fig:execution}.

\begin{figure}[h]
        \caption{An execution in the \ftsm{}. Each round, the messages of the process selected by the adversary are highlighted with a darker tone.}
        \includegraphics[width=.7\textwidth]{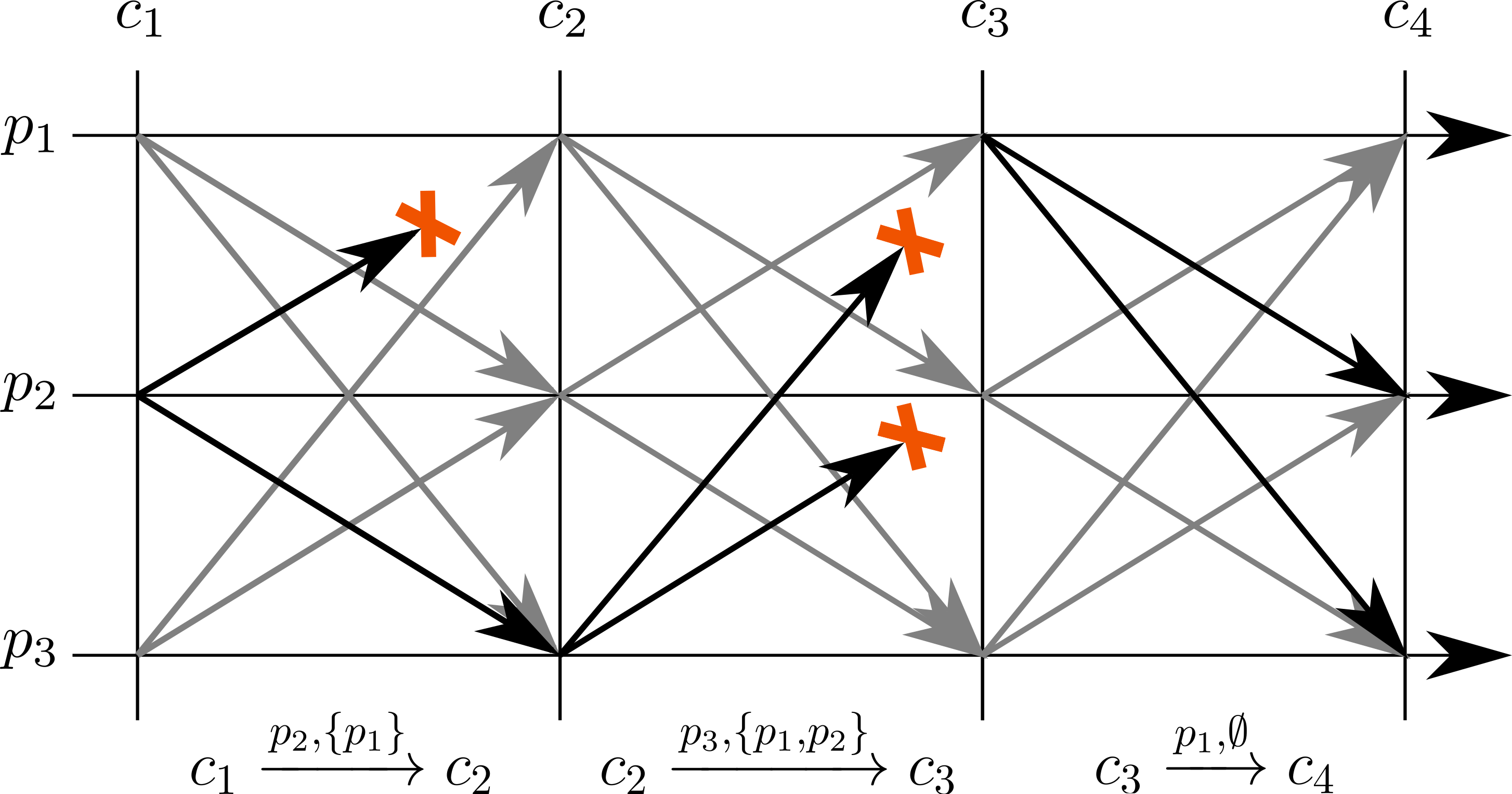}
        \centering
        \label{fig:execution}
\end{figure}

\subsubsection{The \ftrm{}}
The \ftrm{} is like the \ftsm{}, except that the specification of the adversary is different: each round, for every process $p$, the adversary may drop one of the messages addressed to $p$.
So, contrary to the \ftsm{}, it is possible that different processes miss a message from a different process.

\subsection{Simulations and Colorless Tasks}
\label{sec:simulations_and_tasks}

In~\Cref{sec:equiv}, we show using simulation algorithms that the four models above all solve the same colorless tasks.
We now informally define these notions.

A model $B$ simulates a model $A$, written $A\leq B$, when the communication primitives of model $A$, including the constraints placed on them, can be implemented using the communication primitives of model $B$.
For a formal definition of what this means, we point the reader to Attiya and Welch~\cite[Chapter 7]{attiya_distributed_2004} for a formal presentation of simulations.
When models $A$ and $B$ both simulate each other, we write $A\equiv B$.

A colorless task is a relation $\Delta$ between the sets of inputs that the processes may have and the set of outputs that they may produce.
Note that we care only about sets of inputs or outputs, and not about which process has which input or produces which output.
This is what makes a task colorless.

We say that an algorithm in a model solves a colorless tasks when, in every execution:
\begin{enumerate}
    \item Every process that does not fail produces an output, and
    \item If $I$ is the set of inputs held by the processes and $O$ is the set of outputs produced, then $(I,O)\in \Delta$.
\end{enumerate}
For a more precise and rigorous treatment of tasks and colorless tasks, see Herlihy et al.~\cite[Chapter 4]{herlihy_distributed_2013}.

Note that, when solving a colorless task, a process can safely adopt the output of another process.
This is important, e.g., to solve a colorless task $T$ in the \ftrm{} by simulating an algorithm that solves $T$ in the FLP model: Because one process may not output in the FLP model (whereas all processes have to output in the \ftrm{}), this process may need to adopt the output of another.
For a colorless task, this is not a problem.

Informally, we have the following lemma:
\begin{lemma}
    For every two models $A$ and $B$ out of the four models of this section, if $B\leq A$ then $A$ solves all the colorless tasks that $B$ solves.
\end{lemma}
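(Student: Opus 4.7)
The plan is to compose the simulation $\sigma$ witnessing $B\leq A$ with an algorithm $\alpha_B$ that solves the colorless task $T$ in $B$, so as to obtain an algorithm $\alpha_A$ that solves $T$ in $A$. Concretely, the processes in $A$ run $\sigma$ to drive a simulated execution of $\alpha_B$ in $B$, and each process in $A$ then writes to its own output register an output produced, in that simulated execution, by some $B$-process (typically the one it itself simulates via $\sigma$).

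The main subtlety is a mismatch between failure models: in the asynchronous models (FLP and $1$-resilient shared-memory) at most one process may fail and need not output, while in the synchronous models (fail-to-send and fail-to-receive) every process must output. When $B$ permits a failure but $A$ does not, the simulated $B$-execution may leave one simulated process without an output, while the $A$-process driving it must still produce one; this is where the hypothesis that $T$ is colorless becomes essential. The $A$-process runs an extra adoption phase on top of $\sigma$, waiting until another process's simulated $B$-process has output and then copying that value. Synchrony of $A$ guarantees that this phase terminates. In the reverse situation, where $B$ has no failures and $A$ does, every non-failed $A$-process already obtains an output from its simulated $B$-process, since in any execution of $\alpha_B$ every $B$-process outputs.

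For correctness, every execution of $\alpha_A$ in $A$ projects, via $\sigma$, to an execution of $\alpha_B$ in $B$ with the same multiset of inputs $I$; by assumption $(I,O_B)\in\Delta$, where $O_B$ is the set of outputs of the simulated execution. Let $O_A$ be the set of outputs actually written in $A$; by construction $\emptyset\neq O_A\subseteq O_B$, and the standard closure of colorless tasks under passing to non-empty subsets of outputs then yields $(I,O_A)\in\Delta$. The part requiring the most care is the adoption phase in the asynchronous-to-synchronous direction: one must exhibit a short sub-protocol on top of $\sigma$ whose termination follows from the synchrony of $A$ and whose adopted values are copied verbatim, so that no value outside $O_B$ is ever written in $A$.
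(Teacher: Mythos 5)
Your proposal is correct and follows exactly the route the paper intends: the paper states this lemma only informally and, in the paragraph just before it, points to the same key idea you use, namely that for a colorless task a process may safely adopt another's output to bridge the mismatch between models where a process may fail without outputting and models where every process must output. Your write-up merely makes explicit what the paper leaves implicit (composition of the simulation with the algorithm, and closure of colorless tasks under non-empty subsets of outputs, for which the paper defers to Herlihy et al.), so there is nothing to correct.
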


We now define the consensus problem as a colorless task, independently of the model.
\begin{definition}
    In the consensus problem, each process receives $0$ or $1$ as input and the outputs of the processes must satisfy the following properties:
    \begin{itemize}
        \item[Agreement] No two processes output different values.
        \item[Validity] If all processes receive the same input $b$, then no process outputs $b'\neq b$.
    \end{itemize}
\end{definition}

\section{Model Equivalences}
\label{sec:equiv}

In this section, we show that the four models described in~\Cref{sec:models} all solve the same colorless tasks using simulations.
We do not cover the two simulations between the 1-resilient shared-memory model and the FLP model, as this is done brilliantly by Lynch in her book~\cite[Chapter 17]{lynch_distributed_1996}.

\subsection{fail-to-send \texorpdfstring{$\equiv$}{=} fail-to-receive}
\label{sec:fts-eq-ftr}

\subsubsection{fail-to-receive $\leq$ fail-to-send}
The \ftsm{} is a special case of the \ftrm{}: if a single process $p$ fails to send some of its messages (the \ftsm{}), then each process fails to receive from $p$, which is valid in the \ftrm{}.
So the \ftsm{} trivially simulates the \ftrm{}, and we have \ftrm{} $\leq$ \ftsm{}.

\subsubsection{fail-to-send $\leq$ fail-to-receive}
In other direction, it is a-priori not obvious whether we can take a system in which each process may fail to receive from a different process and simulate a system in which all processes may fail to receive from the same process.
Surprisingly, if we have $n>2$ processes, we can simulate the \ftsm{} in the \ftrm{}.

We simulate each round of the \ftsm{} using an instance of the get-core algorithm of Gafni~\cite[Chapter~14]{attiya_distributed_2004}, which takes 3 rounds, which we call phases, of the \ftrm{}.

Each process $p$ starts the first phase with a message that it wants to simulate the sending of and, at the end of the third round, it determines a set of messages to simulate the delivery of.
To obtain a correct simulation, we must ensure that, each simulated round $r$, there is a unique process $p$ such that all processes receive all simulated messages sent in the round except for some messages sent by $p$.

To simulate one round of the \ftsm{}, each process $p$ does the following.
\begin{itemize}
    \item
        In phase~1, $p$ broadcasts its simulated message.
    \item
        In phase~2, $p$ broadcasts the set of simulated messages it received in the first phase and its own simulated message.
    \item
        In phase~3, $p$ broadcasts a message containing the union of all the sets of simulated messages it received in phase~2.
    \item
        Finally, $p$ simulates receiving the union of all the sets of simulated messages it received in phase~3.
\end{itemize}

\begin{lemma}
    \label{lem:emulation}
    When $n>2$, each simulated round, there is a set $S$ of $n-1$ processes such that every process simulates receiving the messages of all members of $S$.
\end{lemma}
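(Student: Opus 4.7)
The plan is to apply the standard get-core analysis to the three-phase structure of the algorithm. For each process $r$, let $A_r$, $Q_r$, $R_r$ denote, respectively, the sets of processes from which $r$ received a phase-1, phase-2, or phase-3 message. Unfolding the definitions, the set of simulated messages that $p$ eventually obtains is
\[
V_p \;=\; \bigcup_{q \in R_p}\; \bigcup_{q' \in Q_q}\; A_{q'},
\]
and each of $|A_{q'}|$, $|Q_q|$, $|R_p|$ is at least $n-2$ by the fail-to-receive guarantee that every process misses at most one incoming message per round. The goal is to exhibit a single ``silenced'' process $p^\star$ such that, with $S := \mathcal{P} \setminus \{p^\star\}$, every $s \in S$ lies in $V_p$ for every $p$; this matches the semantics of a \ftsm{} round in which $p^\star$ is the process silenced by the adversary.

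First, I would select $p^\star$ by a pigeonhole argument on phase-1 drops: the total number of such drops is at most $n$ (one per receiver), so an averaging argument isolates at most one sender whose phase-1 message is ``badly delivered'' (dropped by more than one receiver); I take $p^\star$ to be that sender, or pick any process if none qualifies. For every $s \ne p^\star$, the set $\{q' : s \in A_{q'}\}$ of processes that actually received $s$'s phase-1 message then has size at least $n-2$.

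Second, for every receiver $p$ and every $s \in S$, I would construct a witness chain $q \in R_p$, $q' \in Q_q$ with $q' \ne s$ and $s \in A_{q'}$, by counting at each level: $R_p$ excludes at most two processes ($p$ itself and at most one dropped phase-3 sender), each $Q_q$ excludes at most two, and by the choice of $p^\star$ the set of $q'$ with $s \in A_{q'}$ excludes at most two. Combining these three ``at most two'' exclusions with $n > 2$ leaves at least one valid witness chain, so $s \in V_p$.

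The main obstacle is the adversary's freedom to correlate drops across the three phases and across receivers, which can defeat any single fixed chain. The key insight, carried over from Gafni's analysis, is that the two successive union-of-received-sets steps in phases~2 and~3 dilute any concentrated attack on a single sender, so that once the one exceptional sender $p^\star$ is set aside, every remaining sender's phase-1 message survives through at least one uncorrupted chain to every receiver; making this argument quantitatively tight, using only the $|{\cdot}|\geq n-2$ bounds and the hypothesis $n>2$, is the delicate step of the proof.
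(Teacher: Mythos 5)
There are two genuine gaps. First, your pigeonhole step is false: the $n$ phase-1 drops (one per receiver) do \emph{not} isolate a single badly-delivered sender. For example, with $n=4$ the adversary can have receivers $1,2$ drop sender $3$'s message and receivers $3,4$ drop sender $1$'s message, so \emph{two} senders are each dropped by more than one receiver; in general up to $\lfloor n/2\rfloor$ senders can be badly delivered, so the premise that every $s\neq p^\star$ satisfies $|\{q' : s\in A_{q'}\}|\geq n-2$ does not hold. Second, the step you yourself flag as ``delicate'' --- combining the three ``at most two'' exclusions to produce a witness chain --- does not go through from the stated bounds alone: for a fixed $q$, the sets $Q_q$ and $\{q' : s\in A_{q'}\}$ each have size at least $n-2$ inside a universe of $n$ processes, so their intersection is only guaranteed to have size $n-4$, which is empty for $n=3,4$. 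Since the lemma is claimed for all $n>2$, the argument is incomplete exactly where it matters, and quantifying over $q\in R_p$ does not obviously rescue it because the adversary may correlate the drops across levels.

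The paper avoids both problems by choosing the distinguished objects from \emph{phase 2} rather than phase 1: a counting argument ($n(n-2)$ phase-2 messages are received in total, versus at most $n(n-3)$ if every sender reached at most $n-3$ others) yields a hub $p_l$ whose phase-2 message reaches at least $n-2$ processes; $S$ is then taken to be the $\geq n-2$ processes $p_l$ heard from in phase~1, so at least $n-1$ processes carry all of $S$'s messages after phase~2, and since each process misses at most one message in phase~3 it necessarily hears from one of these $\geq 2$ carriers. That is, the set $S$ is \emph{discovered} from the execution rather than fixed in advance as $\mathcal{P}\setminus\{p^\star\}$, which is what makes the dissemination argument a two-line counting step instead of a per-sender chain analysis. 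If you want to salvage your route, you would need to redefine $p^\star$ (e.g., as the at most one sender whose phase-1 message reaches \emph{no} other process) and then argue dissemination through the union steps, but that still requires care for $n=3,4$.
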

\begin{proof}
    Consider a simulated round.
    First, we show that (a) there is a process $p_l$ such that at least $n-2$ processes different from $p_l$ hear from $p_l$ in phase~2.
    Suppose towards a contradiction that, for every process $p$, there are no more than $n-3$ processes that hear from $p$ in phase~2.
    Then, the total number of messages received in phase~2 is at most $n(n-3)$.
    However, in the \ftrm{}, each process receives at least $n-2$ messages (since every process fails to receive at most one message), so at least $n(n-2)$ messages are received by the end of phase~2.
    Since $n(n-2)>n(n-3)$ for $n>2$, this is a contradiction.

    Next, consider the set $S'$ of at least $n-2$ processes different from $p_l$ that $p_l$ hears from in phase~1 and let $S=S'\cup\{p_l\}$.
    Note that $S$ has cardinality $n-1$.
    Let $M_S$ be the set of simulated messages of the members of $S$.

    In phase 2, $p_l$ broadcasts the set of simulated messages received from $S'$ and its own simulated message, i.e.\ $p_l$ broadcasts $M_S$.
    Thus, by Point (a) above, at least $n-1$ processes ($p_l$ and the $n-2$ other processes that receive its message) hold all the simulated messages in $M_S$ by the end of phase~2.
    Thus, because $n-1\geq 2$ for $n>2$, and since the adversary can only prevent each process from receiving one message, all processes receive $M_S$ in phase~3.
    \qed
\end{proof}

\begin{lemma}
    The simulation algorithm is correct.
\end{lemma}
\begin{proof}
    By~\Cref{lem:emulation}, in each simulated round, all processes receive all messages sent in the round except for some messages of a unique process.
    Thus, the simulation algorithm faithfully simulates the \ftsm{}.
\end{proof}

\subsection{FLP \texorpdfstring{$\equiv$}{=} fail-to-receive}
\label{sec:flp-equiv-ftr}

\subsubsection{fail-to-receive $\leq$ FLP}
We simulate the \ftrm{} in the FLP model using a simple synchronizer algorithm.
Each process maintains a current round, initialized to 1, a simulated state, initially its initial state in the simulated model, and a buffer of messages, initially empty.

Each process $p$ obeys the following rules:
\begin{itemize}
    \item
        When $p$ is in round $r$ and $p$ receives a round-$r'$ message, if $r'<r$ then $p$ discards the message and otherwise $p$ buffers the message.
    \item
        When process $p$ enters a round $r$, it broadcasts its round-$r$ simulated message to all.
    \item
        When $p$ is in round $r$ and has $n-2$ round-$r$ simulated messages in its buffer, it simulates receiving all those message and then increments its round number.
\end{itemize}

It is easy to see that if a process $p$ does not fail, then it proceeds from round to round receiving messages sent in the previous round from all other processes except for a single process, which satisfies the constraints of the \ftrm{}.
Thus the simulation is correct.

\subsubsection{FLP $\leq$ fail-to-receive}

The only difficulty in simulating the FLP model in the \ftrm{} is that we have to ensure that every message sent in the simulated algorithm is eventually delivered, except for at most one process, despite the fact that messages can be lost in the \ftrm{}.

To overcome this problem, it suffices that, each round, each process re-broadcast all the simulated messages it has received or sent in previous rounds (by including them in its message for the current round).
In this manner, every simulated message is eventually delivered unless the sender fails to send any messages forever; in this case, the sender can be considered crashed in the simulation.

Finally, to ensure that messages that are sent multiple times in the simulated algorithm can be told apart from messages that are simply re-sent by the simulation algorithm, each process uses a strictly monotonic counter whose value is attached to simulated messages.

\section{Impossibility of Consensus in the Fail-To-Send Model}
\label{sec:ftsm-impossibility}

In this section, we show that consensus is impossible in the \ftsm{}.
To keep the proof constructive, we consider the pseudo-consensus problem, which is solvable, and we show that every pseudo-consensus algorithm has an infinite execution in which no process outputs.
Since solving consensus implies solving pseudo-consensus, this shows that consensus is impossible.

The proof hinges on the notion of $p$-silent execution, which is just an execution in which the adversary drops every message of $p$.
\begin{definition}[p-silent and 1-silent execution]
    We say that an execution $e$ is $p$-silent, for a process $p$, when $e$ is of the form $c_1 \xrightarrow{p,\mathcal{P}}c_2 \xrightarrow{p,\mathcal{P}}c_3 \xrightarrow{p,\mathcal{P}}\dots$.
    We say that an execution is 1-silent when it is $p$-silent for some $p$.
\end{definition}

\begin{definition}[Pseudo-consensus]
    The pseudo-consensus problem relaxes the termination condition of the consensus problem by requiring outputs only in eventually failure-free executions and eventually 1-silent executions.
\end{definition}
Note that pseudo-consensus is solvable, e.g.\ using a variant of the phase-king algorithm~\cite{berman_towards_1989}.
We now consider a pseudo-consensus algorithm.

Throughout the section, we say that a process decides $b$ in an execution when it outputs $b$, and, when a process decides $b$ in an execution, we also say that the execution decides $b$.

\begin{definition}[$p$-dependent configuration]
        A configuration $c$ is $p$-dependent when the decision in the failure-free execution from $c$ is different from the decision in the $p$-silent execution from $c$.
\end{definition}

\begin{lemma}
        \label{lem:p-dep-non-decided}
        If $c$ is a $p$-dependent configuration, then no process has decided in~$c$.
\end{lemma}
\begin{proof}
        Suppose by contradiction that a process has decided on a value $v$ in~$c$.
        Since $c$ is $p$-dependent, there are two executions, starting from $c$, that decide differently.
        Thus, one of these executions decides a value $v'\neq v$.
        This contradicts the agreement property of pseudo-consensus.
\end{proof}

\begin{definition}[Sequence of adjacent configurations]
    We say that a sequence of configurations $c_0,\dots,c_m$ is a sequence of adjacent configurations when, for each $i\in 1..m$, the configurations $c_{i-1}$ and $c_i$ differ only in the local state of a single process noted $p_i$.
\end{definition}

We are now ready to state and prove our main lemma:
\begin{lemma}
        \label{l1}
        Consider a sequence of adjacent configurations $c_0,\dots,c_m$.
        Suppose that the failure-free decision from $c_0$ is different from the failure-free decision from $c_m$.
        Then there exists $k\in 0..m$ and a process $p$ such that $c_k$ is $p$-dependent.
\end{lemma}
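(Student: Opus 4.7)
The plan is to treat the sequence $c_0,\dots,c_m$ as a one-dimensional complex whose vertices are "colored" by their failure-free decision, and then extract a $p$-dependent configuration from the unique color change forced by the hypothesis.

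First I would apply the 1-dimensional case of Sperner's lemma (the discrete intermediate value theorem): let $f(i)\in\{0,1\}$ denote the failure-free decision from $c_i$. Since $f(0)\neq f(m)$, there must exist some index $k\in 1..m$ such that $f(k-1)\neq f(k)$. By definition of adjacency, $c_{k-1}$ and $c_k$ differ only in the local state of a single process $p=p_k$.

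Next, I would argue that the $p$-silent executions from $c_{k-1}$ and from $c_k$ yield the same decision value. The key observation is an indistinguishability argument: in a $p$-silent execution, every message sent by $p$ is dropped, so no process other than $p$ ever sees any information originating from $p$. Since $c_{k-1}$ and $c_k$ agree on the local state of every process except $p$, the non-$p$ processes execute identical sequences of steps in the two $p$-silent executions, and in particular reach the same decision. By Agreement (which pseudo-consensus preserves, since outputs are required in $p$-silent executions), $p$ must eventually output the same value as the other processes in each case, so the common decision value $d$ in the $p$-silent executions from $c_{k-1}$ and from $c_k$ coincides.

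Finally, since $f(k-1)\neq f(k)$, the value $d$ cannot equal both; so either $d\neq f(k-1)$, making $c_{k-1}$ a $p$-dependent configuration, or $d\neq f(k)$, making $c_k$ a $p$-dependent configuration. Either way the lemma is proved. I expect the only subtlety—and the main thing to state carefully—is the indistinguishability step, where one must be precise that although $p$'s own behavior in the two $p$-silent executions may differ, this has no impact on what the remaining $n-1$ processes observe or decide, and therefore (via Agreement) no impact on the execution's common decision value.
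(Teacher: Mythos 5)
Your proposal is correct and follows essentially the same route as the paper's proof: locate the adjacent pair where the failure-free decision flips, observe that the $p$-silent decision is the same from both configurations of the pair (since they differ only in $p$'s local state and $p$'s messages are all dropped), and conclude that one of the two must be $p$-dependent. The paper phrases the last step as a case split on the value of the $p_j$-silent decision rather than your single ``$d$ cannot equal both'' observation, and it leaves the indistinguishability step as a one-line remark, but these are presentational differences only.
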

\begin{proof}
        Suppose, without loss of generality, that the failure-free decision from $c_0$ is $0$ while the failure-free decision from $c_m$ is $1$.
        Then, there must be $j\in 1..m$ such that the failure-free decision from $c_{j-1}$ is $0$ and the failure-free decision from $c_j$ is $1$ (this is the one-dimensional Sperner lemma).

        We now have two cases.
        First, suppose that the $p_j$-silent decision from $c_j$ is~$0$.
        Then, because the failure-free decision from $c_j$ is 1, we conclude that $c_j$ is $p$-dependent.

        Second, suppose that the $p_j$-silent decision from $c_j$ is $1$.
        Note that, because $c_{j-1}$ and $c_j$ only differ in the local state of $p_j$, if $p_j$ remains silent, then the decision is the same regardless of whether we start from $c_{j-1}$ or $c_j$.
        Thus, the $p_j$-silent decision from $c_{j-1}$ is also $1$.
        We conclude that $c_{j-1}$ is $p_j$-dependent.
        \qed
\end{proof}

We now prove by induction that we can build an infinite execution consisting entirely of $p$-dependent configurations.

\begin{lemma}
    \label{l2}
    There exists a $p$-dependent initial configuration for some process $p$.
\end{lemma}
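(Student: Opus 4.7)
The plan is to apply Lemma~\ref{l1} to a carefully chosen sequence of adjacent initial configurations that interpolates between the all-zeros input assignment and the all-ones input assignment. By Validity of pseudo-consensus, the failure-free decisions at the two ends of this sequence are $0$ and $1$ respectively, so the hypothesis of Lemma~\ref{l1} is satisfied, and the lemma yields a $p$-dependent configuration somewhere along the way. Since the entire sequence consists of initial configurations, the $p$-dependent configuration is itself initial.

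Concretely, I would fix an arbitrary ordering $q_1, q_2, \dots, q_n$ of the processes, and for $i \in 0..n$ let $c_i$ be the initial configuration in which $q_1, \dots, q_i$ start with input $1$ and $q_{i+1}, \dots, q_n$ start with input $0$. Each $c_i$ is indeed a legal initial configuration (every process is in its initial internal state and has an empty output register), and $c_{i-1}$ and $c_i$ differ only in the input register of $q_i$, so $c_0, \dots, c_n$ is a sequence of adjacent configurations in the sense of the previous definition.

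Next, I would invoke Validity: the failure-free execution from $c_0$ gives every process input $0$, so it must decide $0$ (failure-free executions must terminate under pseudo-consensus, and no process may output a value different from the common input); symmetrically, the failure-free decision from $c_n$ is $1$. Thus Lemma~\ref{l1} applies and produces an index $k \in 0..n$ and a process $p$ such that $c_k$ is $p$-dependent. Since $c_k$ is an initial configuration by construction, this is the desired $p$-dependent initial configuration.

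I do not anticipate a serious obstacle here; the only subtlety worth flagging is that the $p$-silent decision used inside Lemma~\ref{l1} is well-defined, which follows from the fact that pseudo-consensus guarantees termination in $1$-silent executions. The proof is essentially the one-dimensional Sperner/discrete-IVT step mentioned in the introduction, packaged so that Lemma~\ref{l1} does the combinatorial work.
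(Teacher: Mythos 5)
Your proof is correct and follows essentially the same route as the paper: build the chain of adjacent initial configurations interpolating from the all-$0$ to the all-$1$ input assignment, use Validity to get opposite failure-free decisions at the endpoints, and invoke Lemma~\ref{l1}. Your indexing is in fact cleaner than the paper's (whose condition ``input $0$ iff $j\geq i$'' does not quite reach the all-ones configuration at $c_n$), and your remark that $1$-silent termination makes the $p$-silent decisions well-defined is a worthwhile point the paper leaves implicit.
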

\begin{proof}
        Order the processes in an arbitrary sequence $p_1,\dots,p_n$.
        Consider the sequence of initial configurations $c_0,\dots,c_n$ where, for each configuration $c_i$ and each process $p_j$, $p_j$ has input $0$ if and only if $j\geq i$.
        Note that the sequence $c_0,\dots,c_n$ is a sequence of adjacent configurations: for each $i\in 1..n$, configurations $c_{i-1}$ and $c_i$ differ only in the input of the process $i$.
        Moreover, by the validity property of consensus, the failure-free decision from $c_0$ is $0$ and the failure-free decision from $c_1$ is $1$.
        Thus, by~\Cref{l1}, there exists a process $p$ and $k\in 0..n$ such that $c_k$ is $p$-dependent.
        \qed
\end{proof}

\begin{lemma}
        \label{l3}
        Suppose that the configuration $c$ is $p$-dependent.
        Then there exists a process $q$, a set of processes $P$, and a configuration $c'$ such that $c \xrightarrow{p,P} c'$ and $c'$ is $q$-dependent.
\end{lemma}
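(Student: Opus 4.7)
My plan is to interpolate, via the set $P$ of processes that fail to receive from $p$, between the failure-free one-step successor of $c$ and its $p$-silent one-step successor, and then invoke \Cref{l1} with a short boundary case handled by hand. Concretely, I will enumerate the processes in $\mathcal{P}\setminus\{p\}$ as $q_1,\dots,q_{n-1}$ and, for $i\in 0..n-1$, define $P_i = \{q_1,\dots,q_i\}$ and let $c'_i$ be the unique configuration with $c \xrightarrow{p,P_i} c'_i$. Then $c'_0$ is the failure-free successor of $c$ (since $P_0=\emptyset$), and $c'_{n-1}$ is the $p$-silent successor of $c$ (since $p$ does not send to itself, $P_{n-1}=\mathcal{P}\setminus\{p\}$ behaves identically to $\mathcal{P}$ in $c\xrightarrow{p,\cdot}$).

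I will then verify that $c'_0,\dots,c'_{n-1}$ is a sequence of adjacent configurations: enlarging $P_{i-1}$ by $q_i$ only changes whether $q_i$ receives $p$'s round message, so the resulting configurations can differ only in the local state of $q_i$. Let $b$ denote the failure-free decision from $c$ and $\overline{b}$ the $p$-silent decision from $c$; these are distinct because $c$ is $p$-dependent. Continuing failure-free from $c'_0$ yields exactly the failure-free execution from $c$, so the failure-free decision from $c'_0$ is $b$; symmetrically, continuing $p$-silently from $c'_{n-1}$ yields the $p$-silent execution from $c$, so the $p$-silent decision from $c'_{n-1}$ is $\overline{b}$.

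The final step splits on the failure-free decision from $c'_{n-1}$. If it equals $\overline{b}$, then the failure-free decisions at the endpoints of the adjacent sequence differ, so \Cref{l1} yields some $k\in 0..n-1$ and some process $q$ such that $c'_k$ is $q$-dependent; together with $c \xrightarrow{p,P_k} c'_k$ this closes the case. If instead it equals $b$, then $c'_{n-1}$ itself has failure-free decision $b$ and $p$-silent decision $\overline{b}$, hence is $p$-dependent, and we set $q = p$, $c' = c'_{n-1}$, $P = P_{n-1}$.

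I do not anticipate a serious obstacle. The one point worth stating carefully is why consecutive $c'_i$'s differ only in a single process's local state, but this is immediate from the synchronous round semantics, since for every process other than $q_i$ the multiset of messages received is the same under $P_{i-1}$ and $P_i$. The rest is purely bookkeeping on top of \Cref{l1}.
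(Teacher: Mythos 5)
Your proposal is correct and follows essentially the same argument as the paper: interpolating between the failure-free and $p$-silent one-step successors of $c$ by growing the set of processes that miss $p$'s message, then either applying \Cref{l1} to the resulting adjacent sequence or observing that the all-dropped successor is itself $p$-dependent. The only difference is cosmetic (you orient the sequence from the failure-free end to the $p$-silent end and anchor the case split there, while the paper does the reverse), so no further comment is needed.
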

\begin{proof}
        Without loss of generality, assume that the $p$-silent decision from $c$ is $0$.
        Consider the configuration $c'$ such that $c\xrightarrow{p,\mathcal{P}} c'$ (i.e. no process receives $p$'s message in the transition from $c$ to $c'$).
        There are two cases.
        First, suppose that the failure-free decision from $c'$ is $1$.
        Note that the $p$-silent decision from $c'$ must be $0$ because $c'$ is the next configuration after $c$ in the $p$-silent execution from $c$, and $0$ is the $p$-silent decision from $c$.
        Thus, $c'$ is by definition $p$-dependent, and we are done with this case.

        Second, suppose that the failure-free decision from $c'$ is $0$.
        Now order the processes in $\mathcal{P}\setminus\{p\}$ in a sequence $p_1,\dots,p_{n-1}$.
Consider the sequence of configurations $c_1,\dots,c_n$ such that $c\xrightarrow{p,\mathcal{P}}c_1$ (so $c_1=c'$), $c\xrightarrow{p, \mathcal{P}\setminus \{p_1\}}c_2$, $c\xrightarrow{p, \mathcal{P}\setminus \{p_1,p_2\}}c_3$, etc.\ until $c\xrightarrow{p,\emptyset}c_n$. 
        In other words, no process receives the message from $p$ in the transition from $c$ to $c_1$; only $p_1$ receives $p$'s message in the transition from $c$ to $c_2$; only $p_1$ and $p_2$ receive $p$'s message in the transition from $c$ to $c_3$; etc.\ and all processes receive $p$'s message in the transition from $c$ to $c_n$.
        \Cref{fig:lemma} illustrates the situation when there are 3 processors.

        Note that, for each $i\in 1..n-1$, configurations $c_i$ and $c_{i+1}$ only differ in $p_i$ not having or having received $p$'s message; thus, the sequence $c_1,\dots,c_n$ is a sequence of adjacent configurations.
        Moreover, because $c_1=c'$, the failure-free decision from $c_1$ is $0$.
        Additionally, because $c$ is $p$-dependent and the $p$-silent decision from $c$ is $0$, the failure-free decision from $c_n$ is $1$.
        Thus, we can apply~\Cref{l1}, and we conclude that there exists a process $q$ and $i\in 1..n$ such that $c_i$ is $q$-dependent.
        \qed
\end{proof}

\begin{figure}[h]
        \caption{Situation in the second case of the proof of~\Cref{l3}, where $\mathcal{P}=\{p_1,p_2,p_3\}$ (so $n=3$) and $c$ is $p_2$-dependent.
        There must exist $q\in\mathcal{P}$ such that one of the configurations $c_1$, $c_2$, or $c_3$ is $q$-dependent.}
        \includegraphics[width=\textwidth]{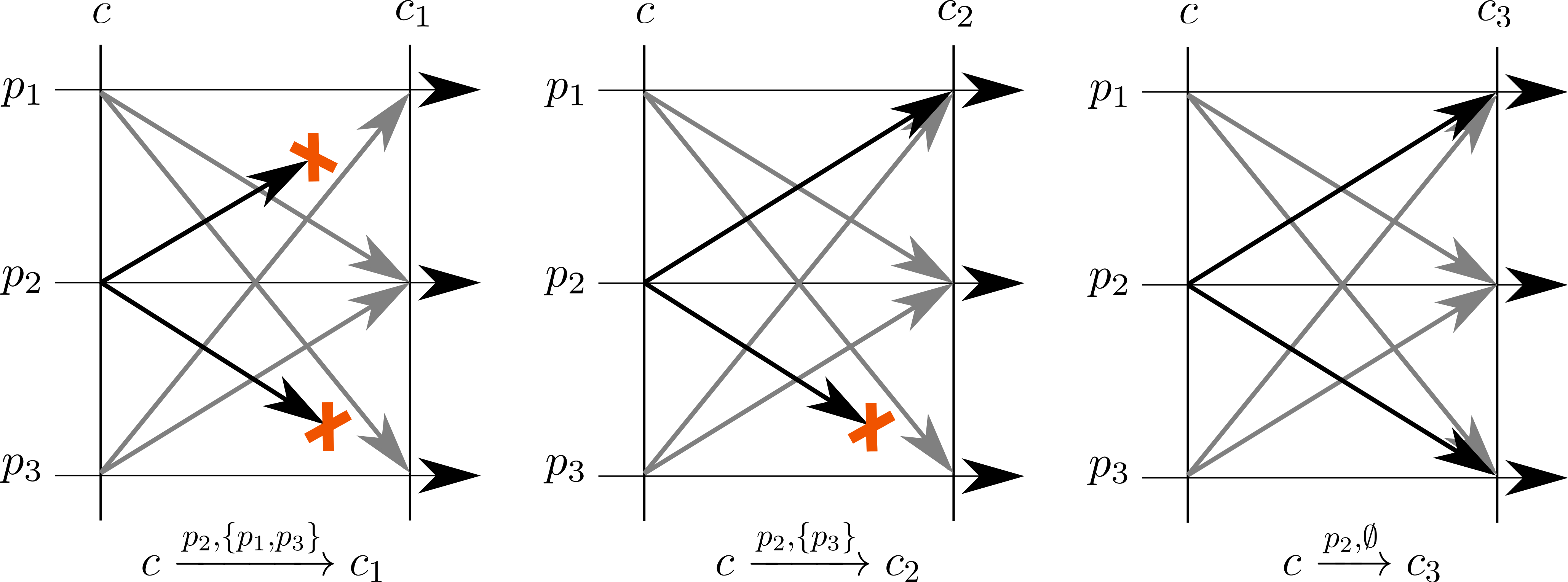}
        \centering
        \label{fig:lemma}
\end{figure}

\begin{theorem}
        Every pseudo-consensus algorithm has an infinite non-deciding execution.
\end{theorem}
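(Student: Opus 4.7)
The plan is to combine Lemmas~\ref{l2} and~\ref{l3} into a straightforward coinductive construction. By~\Cref{l2}, pick an initial configuration $c_0$ and a process $q_0$ such that $c_0$ is $q_0$-dependent. Then, inductively, suppose we have built a finite prefix $c_0 \xrightarrow{q_0,P_0} c_1 \xrightarrow{q_1,P_1} \dots \xrightarrow{q_{i-1},P_{i-1}} c_i$ such that each $c_j$ is $q_j$-dependent. Applying~\Cref{l3} to the $q_i$-dependent configuration $c_i$ yields a process $q_{i+1}$, a set $P_i$, and a configuration $c_{i+1}$ such that $c_i \xrightarrow{q_i,P_i} c_{i+1}$ and $c_{i+1}$ is $q_{i+1}$-dependent. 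Iterating this step produces an infinite execution
\[
c_0 \xrightarrow{q_0,P_0} c_1 \xrightarrow{q_1,P_1} c_2 \xrightarrow{q_2,P_2} \dots
\]
that is a valid execution of the pseudo-consensus algorithm in the \ftsm{}, and whose every configuration is $p$-dependent for some $p$.

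To conclude, I would observe, as already noted right after the definition of $p$-dependent configuration, that no process has decided in any $p$-dependent configuration: if some process had already written $b$ to its output register, then by the irrevocability of outputs and the Agreement property of pseudo-consensus (which the algorithm is assumed to satisfy in every execution, not only the 1-silent or failure-free ones), both the failure-free continuation and the $p$-silent continuation would have to decide $b$, contradicting $p$-dependence. Hence no process decides in any $c_i$, so the constructed execution is infinite and non-deciding, which is precisely what the theorem asserts.

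I do not expect a real obstacle here: the heavy lifting is already done by~\Cref{l2} and~\Cref{l3}, and the argument is essentially a König-style infinite-descent using $p$-dependence as the invariant. The only subtlety worth spelling out is that the extension given by~\Cref{l3} uses the specific action $\xrightarrow{q_i,P_i}$ at each step, so the resulting sequence is by construction a legal execution of the \ftsm{}; nothing further needs to be checked to fit it into the formal execution model from~\Cref{sec:models}. Finally, since any consensus algorithm solves pseudo-consensus, the same execution witnesses the impossibility of consensus, and, as noted in the introduction, the construction is effective in Constable's sense: given the algorithm, one can sequentially compute longer and longer prefixes by brute-force search at each inductive step.
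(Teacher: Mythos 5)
Your proposal is correct and takes essentially the same route as the paper, which also proves the theorem by seeding the construction with \Cref{l2} and then iterating \Cref{l3} to obtain an infinite execution of $p$-dependent configurations in which no process can have decided. Your write-up merely spells out the induction and the non-decision argument in more detail than the paper's one-line proof.
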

\begin{proof}
    Using~\Cref{l2,l3}, we inductively construct an infinite execution in which each configuration is $p$-dependent for some process $p$.
    By~\Cref{lem:p-dep-non-decided}, every $p$-dependent configuration is undecided, and thus no process ever decides.
    \qed
\end{proof}

Note that~\Cref{l3} would fail, and thus the whole proof would fail if, each round, the adversary were constrained to not remove all the messages of the selected process.
This is because we would not be able to construct a sequence of adjacent configurations long enough to go from $c_1$ to $c_n$ (we would be missing one configuration to reach $c_n$ from $c_1$).
In fact, as Santoro and Widmayer remark~\cite{santoro_time_1989}, if the adversary can only remove $n-2$ messages, a protocol proposed in earlier work of theirs solves consensus~\cite{santoro_distributed_1990}.

\section{Related Work}

In 1983, Fischer, Lynch, and Paterson~\cite{fischer_impossibility_1983,fischer_impossibility_1985} first proved the impossibility of solving consensus deterministically in an asynchronous system in which one process may fail-stop.
The proof proceeds by showing that any algorithm that is partially correct (meaning it does not violate agreement and has at least one run that decides 0 and one run that decides 1) has an infinite non-deciding execution consisting of what FLP call bivalent configurations (that is, configurations from which both 0 and 1 can be decided).

Following the FLP result, a number of other works proved similar impossibility results (for deterministic processes) in other models or improved some aspects of the FLP proof.
In 1987, Loui and Abu-Amara\cite{loui_memory_1987} showed that consensus is impossible in shared memory when one process may stop (also proved independently by Herlihy in 1991~\cite{herlihy_wait-free_1991}).

Santoro and Widmayer followed suit in 1989 with the paper ``Time is not a healer''~\cite{santoro_time_1989}, showing, among other results, that, with message-passing communication, even synchrony does not help if, each round, one process may fail to send some of its messages~\cite[Theorem 4.1]{santoro_time_1989}.
\remove{Work from the same authors~\cite{santoro_distributed_1990} presents a consensus algorithm for $n-2$ omission faults, thus showing that $n-1$ is a tight bound.
In the terminology of Santoro and Widmayer, the \ftsm{} of the present paper corresponds to the omission-faults model with $n-1$ faults occurring in a block~\cite[Section 4.1]{santoro_time_1989}.}
The proof of Santoro and Widmayer follows a bivalency argument inspired by the FLP proof.
As we show in~\Cref{sec:equiv}, this result is equivalent to the FLP result and could have been obtained by reduction.

In a pedagogical note, Raynal and Roy~\cite{raynal_note_2005} observe that an asynchronous system with $f$ crash failures and restricted to communication-closed rounds in which each process waits for $n-f$ processes before moving to the next round is equivalent, for task solvability, to the model of Santoro and Widmayer when, each round, each process fails to receive from $f$ processes.

Inspired by Chandy and Misra~\cite{chandy_nonexistence_1985}, Taubenfeld~\cite{taubenfeld_nonexistence_1991} presents a proof of the FLP impossibility in an axiomatic model of sequences of events that avoids giving operational meaning to the events.
This results in a more general and shorter proof.

In their textbook, Attiya and Welch~\cite{attiya_distributed_2004} prove the FLP result by reduction to shared memory.
They first prove that consensus is impossible for two processes and then use a variant of the BG simulation~\cite{borowsky_generalized_1993} to generalize to any number of processes.
Lynch~\cite{lynch_distributed_1996} also takes the shared memory route but proves the shared-memory impossibility using a bivalency argument.

Völzer's proof pearl~\cite{volzer_constructive_2004} gives an elegant, direct proof of the FLP impossibility in the asynchronous message-passing model.
The key insight of Völzer is to build an infinite run consisting of non-uniform configurations, which are bivalent configurations such that different decisions can be reached through $p$-silent executions.
Reading Völzer's paper (in admiration) is the inspiration for the present paper.
Bisping et al.~\cite{bisping_mechanical_2016} present a mechanically-checked formalization of Völzer's proof in Isabelle/HOL.

The latest development regarding the FLP proof, before the present work, is due to Constable~\cite{constable_effectively_2011}.
Constable presents an impossibility proof in the FLP model that roughly follows the FLP proof but that is constructive, meaning that we can extract from this proof an algorithm that, given an effectively non-blocking consensus procedure (in Constable's terminology), computes an infinite non-deciding execution.
The proof of the present paper is also constructive in the same sense.

Finally, the idea of Santoro and Widmayer~\cite{santoro_time_1989} to consider computability questions in a synchronous setting with message-omission faults inspired the development of the general message-adversary model of Afek and Gafni~\cite{afek_simple_2015}; they present a message adversary equivalent to wait-free shared memory and use it to obtain a simple proof of the asynchronous computability theorem~\cite{borowsky_generalized_1993,saks_wait-free_2000,herlihy_topological_1999}.

\printbibliography

\end{document}